\newtheorem{theorem}{Theorem}
\newtheorem{lemma}{Lemma}
\newtheorem{corollary}{Corollary}
\DeclareMathAlphabet{\mathbit}{OML}{cmr}{bx}{it}
\DeclareMathAlphabet{\mathsf}{OT1}{cmss}{m}{n}
\DeclareMathAlphabet{\mathbsf}{OT1}{cmss}{bx}{it}
\newcommand{\Up}{{\text{up}}}
\newcommand{\Prob}{{\text{Pr}}}
\newcommand{\SINR}{{\text{SINR}}}
\newcommand{\E}{{\text{E}}}
\begin{document} 
\title{The Asymptotic Limits of Interference in Multicell Networks with Channel Aware Scheduling}
\author{Paul de Kerret and David Gesbert\\Mobile Communication Department, Eurecom\\
 2229 route des Cr\^etes, 06560 Sophia Antipolis, France\\\{dekerret,gesbert\}@eurecom.fr}

\maketitle

\begin{abstract} 
Interference is emerging as a fundamental bottleneck in many important wireless communication scenarios, including dense cellular networks and cognitive networks with spectrum sharing by multiple service providers.
Although multiple-antenna (MIMO) signal processing is known to offer useful degrees of freedom to cancel interference, extreme-value theoretic analysis recently showed that, even in the absence of MIMO processing, the scaling law of the capacity in the number of users for a multi-cell network with and without inter-cell interference was asymptotically identical provided a simple signal to noise and interference ratio (SINR) maximizing scheduler is exploited. This suggests that scheduling can help reduce inter-cell interference substantially, thus possibly limiting the need for multiple-antenna processing. However, the convergence limits of interference after scheduling in a multi-cell setting are not yet identified. In this paper\footnote{Discussions with Dr. Marios Kountouris are thankfully acknowledged. \\This work has been performed in the framework of the European research project ARTIST4G, which is partly funded by the European Union under its FP7 ICT Objective 1.1 - The Network of the Future.} we analyze such limits theoretically. We consider channel statistics under Rayleigh fading with equal path loss for all users or with unequal path loss. We uncover two surprisingly different behaviors for such systems. For the equal path loss case, we show that scheduling alone can cause the residual interference to converge to zero for large number of users. With unequal path loss however, the interference are shown to converge in average to a nonzero constant. Simulations back our findings. 
\end{abstract}  
\IEEEpeerreviewmaketitle
\section{Introduction}
Interference management has now appeared as one of the main challenges to reach the high data rates request in the future wireless networks. Cooperation and sharing of the users data between the base stations (BS) leads to the so-called multi-cell multiple-antenna (MIMO) network\cite{Somekh2009,Gesbert2010} which achieves high capacity. However, the requirement on the backhaul structure and the feedback are extremely high and have brought the need for more practical distributed approaches. Furthermore, multiuser diversity is well known to provide capacity gains via the scheduling of UEs with good channel gains \cite{Knopp1995} and appears as a promising tool to manage interference. For instance, in the single cell scenario, it was shown that thanks to multiuser diversity it was possible to achieve close to the performances of dirty paper coding with a simple random beamforming scheme, when the number of user equipments (UEs) becomes large \cite{Viswanath2002,Sharif2005}. This principle was extended to the multicell setting when communication is allowed between the BSs \cite{Tang2009}. In \cite{Choi2008}, the improvement brought by intercell scheduling was studied, and scheduling combined with a zero forcing (ZF) precoding scheme was discussed for the Wyner channel in \cite{Somekh2006}. The sum rate scaling was studied for particular types of information traffic when the number of antennas grows large in \cite{Al-Naffouri2008}, and asymptotically in terms of the number of transmitter-receiver pairs in \cite{Ebrahimi2007}. These works are among the many recent examples suggesting the beneficial impact of scheduling in interference-limited multiple-antenna networks. Beyond the assumption of some kind of multiple antenna processing, previous schemes must assume some fast exchange of channel state information between the cells, making it difficult to scale in certain practical scenarios with limited backhaul communications. 

In \cite{Gesbert2011}, the impact of the scheduler over the scaling law of capacity in many user networks was analyzed when all the BSs are assumed to transmit at maximal power, which allows for the distributed solutions and reduce largely the complexity for small costs. There, a \emph{simple distributed scheduler} (\emph{max-SINR} scheduler) has been shown to lead to the same scaling in terms of the number of UEs per cell as the \emph{no-interference upper bound}. This extreme-value theoretic result suggests that scheduling alone can signifcantly reduce the degradation brought by interference, thus confirming a general intuition in our community. Nevertheless it is not clear from existing studies whether scheduling alone can fully eliminate interference or just reduce it (even asymptotically) since this distinction is not visible from scaling law analysis. 

Our main contribution is to answer theoretically some of these questions and the main findings are now summarized. In the symmetric case, the average rate difference between the no-interference upper bound and the rate of the max-SINR scheduler is shown to converge to zero as a $O(log(log(n))/ log(n))$, where $n$ is the number of UEs per cell. The interference after scheduling also converge to zero. Interestingly, we prove that on the opposite the average interference power converge to a nonzero constant in the asymmetric case and that the average rate difference between the no-interference upper bound and the rate of the max-SINR scheduler does not tend to zero. 

\emph{Notations:} We write $g(n)=O(f(n))$ if $\exists N>0,\exists K>0, n\geq N\Rightarrow g(n)\leq K f(n)$. We use $g(n)=o(f(n))$ to denote that $\lim_{n\rightarrow\infty}g(n)/f(n)=0$ and we say that $g(n)$ is equivalent to $f(n)$ if $g(n)=f(n)+o(f(n))$, which we denote as $g(n)\sim f(n)$.
\section{System Model}
\subsection{Description of the Transmission Scheme}
We consider the downlink transmission in a multicell wireless network where in each cell one base station (BS) transmits to one user equipment (UE). The BSs and the UEs have each one antenna, and the UEs receive interference from $N$ neighboring cells (typically the first ring of interferers). We assume that $n$ UEs are located in each cell and apply single user decoding. The additive noise is assumed to be zero-mean white Gaussian with a variance of one. All the BSs transmit at their maximal power $P$, and the schedulers are applied at each BS separately, so that we study only one cell w.l.o.g. 

In the cell considered, the gain of the direct link to UE $k$ is denoted by $\gamma(k)G (k)\in \mathbb{R}^{+}$, where $\gamma(k)$ represents the path loss and $G(k)$ is a random variable modeling the short term fading. The gain of the link from the interfering BS~$j$ to UE~$k$ is denoted on the same pattern by $\gamma_{j}(k)G_{j}(k)\in \mathbb{R}^{+}$. Thus, the SINR at user $k$ is written as
\begin{equation}
\SINR(k)\triangleq\frac{\gamma(k)G(k)P}{1+\sum_{j=1}^{N} \gamma_{j}(k) G_{j}(k)P},\forall k\in[1,n].
\label{eq:system_model_1}
\end{equation}
We also define the short notations 
\begin{equation*}	
\alpha_k\triangleq P\gamma(k)G(k),\quad \beta_k\triangleq\sum_{j=1}^{N}\gamma_{j}(k)G_{j}(k)P,\quad \forall k\in[1,n].
\label{eq:system_model_2}
\end{equation*}  
We focus on distributed schedulers, i.e., which use only local CSI. The \emph{max-SINR} scheduler is an example of such schedulers. It maximizes the SINR of the UE, which reads as

\begin{equation}	
\Gamma\left(\alpha_n^{*},\beta_n^{*}\right))\triangleq\max_{k\in[1,n]}\left(\frac{\alpha_k}{1+\beta_k}\right).
\label{eq:system_model_3}
\end{equation}

We will compare this SINR with the \emph{no-interference} upper bound, obtained by considering a single cell setting with no interfering cell. The SINR after scheduling is then $\alpha^{\Up*}_n\triangleq\max_{k\in[1,n]} \alpha_k $. The rates are defined from the SINRs as 
\begin{equation}
\begin{aligned}
&R\left(\alpha^{*}_n,\beta^{*}_n\right)\triangleq\log_2\left(1+\Gamma\left(\alpha_n^{*},\beta_n^{*}\right)\right)\\
&R_{\Up}\left(\alpha^{\Up*}_n\right)\triangleq\log_2\left(1+\alpha^{\Up*}_n\right).
\end{aligned}
\label{eq:system_model_5}
\end{equation}
We will in fact focus on the average rates, and particularly on the difference between the average rates which we define as
\begin{equation}
\Delta_R(n)\triangleq\E\left[R_{\Up}\left(\alpha^{\Up*}_{n}\right)\right] -E\left[R\left(\alpha^{*}_n,\beta^{*}_{n}\right)\right].
\label{eq:system_model_6}
\end{equation}

\subsection{Channel Models}\label{se:system_model:channel_model}
We now recall the description of the two channel models considered \cite{Sharif2005,Gesbert2011} which we call the \emph{symmetric} and the \emph{asymmetric} model, depending on whether the UEs have equal path loss or not, respectively. For both cases, we model a cell as a disc of radius~$R$ instead of the hexagonal shape.

\subsubsection{The Symmetric Model}
In the symmetric model, we assume that all the UEs have the same average path loss. It is a general and interesting theoretical case, that we model in our cellular model by letting all the UEs be located at the same distance of the serving BS, i. e., on a circle of radius $R_{\text{sym}}$. The path loss is denoted as $\gamma(k)=\gamma$. 

\subsubsection{The Asymmetric Model}
In the asymmetric model, the UEs are distributed uniformly inside the disc of radius $R$. According to a generic path loss model, we have $\gamma(k)\!=\!\lambda d(k)^{-\varepsilon}$ and $\gamma_j(k)\!=\!\lambda d_j(k)^{-\varepsilon}$, with $\lambda$ a scaling factor, $\varepsilon$ the path loss exponent (usually $\varepsilon\!>2\!$), and $d(k)$ (resp. $d_j(k)$) the distance between the UE and the mother (resp. $j$-th interfering) BS. The exact shape of the cell has no impact asymptotically because the probability of scheduling a UE located at the distance $d\!>\!0$ of the BS vanishes as the number of UEs increases.

\section{Asymptotic Analysis in Symmetric Networks}
%
%
%
\begin{lemma}
In symmetric networks with a large number of UEs, the average rate $E\left[R_{\Up}\left(\alpha^{\Up*}_n\right)\right]$ is upper bounded as
\begin{equation*} 
\E\left[R_{\Up}\left(\alpha^{\Up*}_n\right)\right]\leq f(n)+\frac{\log_2(\log(n))}{\log(n)}+ o\left(\frac{\log(\log(n))}{\log(n)}\right)
\label{eq:symmetric_1}
\end{equation*}
with $f(n)\triangleq \log_2(\rho\log(n))$, and lower bounded as 
\begin{equation*} 
\E\left[R_{\Up}\left(\alpha^{\Up*}_n\right)\right]\geq f(n)-\frac{\log_2(\log(n))}{\log(n)}+ o\left(\frac{\log(\log(n))}{\log(n)}\right).
\label{eq:symmetric_2}
\end{equation*}
\label{lemma1}
\end{lemma}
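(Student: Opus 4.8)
The plan is to recognize that in the symmetric model the no-interference metric reduces to the maximum of i.i.d.\ exponential variables, and then to sandwich the expected rate between a Jensen-type upper bound and a threshold-type lower bound. Under Rayleigh fading with common path loss $\gamma(k)=\gamma$, the short-term fading makes each $\alpha_k=P\gamma\,G(k)$ exponentially distributed; writing $\rho$ for its mean, the common CDF is $F(x)=1-\e^{-x/\rho}$. Hence $M_n\triangleq\alpha^{\Up*}_n=\max_{k\in[1,n]}\alpha_k$ has CDF $F_n(x)=(1-\e^{-x/\rho})^n$, and the quantity to be bounded is $\E[R_{\Up}(\alpha^{\Up*}_n)]=\E[\log_2(1+M_n)]$.

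For the upper bound I would invoke concavity of $\log_2(1+\cdot)$ and Jensen's inequality, giving $\E[\log_2(1+M_n)]\le \log_2(1+\E[M_n])$. The expected maximum of $n$ i.i.d.\ exponentials is the scaled harmonic number $\E[M_n]=\rho H_n=\rho(\log n+O(1))$, so that $\log_2(1+\E[M_n])=\log_2(\rho\log n)+\log_2\!\bigl(1+O(1/\log n)\bigr)=f(n)+O(1/\log n)$. Since $O(1/\log n)=o(\log(\log n)/\log n)$, this is in particular bounded by $f(n)+\log_2(\log n)/\log n+o(\log(\log n)/\log n)$, which is even stronger than the stated upper bound.

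For the lower bound I would use a single threshold $t\triangleq\rho(\log n-\log\log n)$. Because $\log_2(1+M_n)\ge 0$ always and $\log_2(1+M_n)\ge\log_2(1+t)$ on the event $\{M_n\ge t\}$, one has $\E[\log_2(1+M_n)]\ge \log_2(1+t)\,\Prob(M_n\ge t)$. The threshold is chosen so that $\e^{-t/\rho}=\log n/n$, whence $\Prob(M_n\ge t)=1-(1-\log n/n)^n=1-\tfrac1n(1+o(1))$, i.e.\ the maximum exceeds $t$ with probability $1-O(1/n)$. Writing $\log_2(1+t)=f(n)+\log_2\!\bigl(1-\tfrac{\log\log n}{\log n}+O(\tfrac1{\log n})\bigr)$ and expanding $\log_2(1+x)=x/\ln2+O(x^2)$, together with the change of base $\log\log n/\ln2=\log_2(\log n)$, gives $\log_2(1+t)=f(n)-\log_2(\log n)/\log n+O(1/\log n)$. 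Multiplying by $\Prob(M_n\ge t)=1-O(1/n)$ perturbs the $O(\log\log n)$ quantity $f(n)$ by only $O(\log\log n/n)=o(\log(\log n)/\log n)$, yielding the stated lower bound.

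The routine parts are Jensen and the harmonic-number asymptotics; the step demanding the most care is the lower bound's error accounting. Specifically I expect the main obstacle to be verifying, to the required precision, that $(1-\log n/n)^n=\tfrac1n(1+o(1))$ and that the first-order Taylor remainder in $\log_2(1+t)$ together with the multiplicative factor $(1-O(1/n))$ are both $o(\log(\log n)/\log n)$; these are what pin the error to the claimed order. The choice $t=\rho(\log n-\log\log n)$ is the crucial tuning: a centering closer to $\rho\log n$ would not keep $\Prob(M_n\ge t)\to1$ fast enough, whereas a more conservative threshold would enlarge the gap beyond $\log_2(\log n)/\log n$.
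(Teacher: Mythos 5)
Your proof is correct, but it takes a genuinely different route from the paper's. The paper works entirely through the extreme-value limit theorem of Sharif--Hassibi (their Theorem~A.2), evaluating $\Prob\{w_n \leq \log(n)+u\}$ at the three offsets $u=\pm\log(\log(n))$ and $u=2\log(\log(n))$, and then splitting $\E\left[\log_2\left(1+\alpha^{\Up*}_n\right)\right]$ over the corresponding events to squeeze both bounds out of the same probabilistic machinery. You instead exploit the exact exponential structure twice: for the upper bound, Jensen's inequality plus the harmonic-number identity $\E[M_n]=\rho H_n$ gives $f(n)+O(1/\log n)$ in two lines, which is strictly sharper than the lemma's stated upper bound and entirely avoids the delicate handling of the upper tail event (where the paper caps the rate by $\log_2(1+\rho n)$, a step that really requires an extra truncation argument since the maximum is unbounded); for the lower bound, you compute $\Prob\{M_n\geq t\}$ directly from the closed-form CDF $(1-\e^{-t/\rho})^n$ at the same threshold $t=\rho(\log n - \log\log n)$ the paper uses, rather than citing the limit theorem, and your error accounting (the Taylor remainder $O((\log\log n/\log n)^2)$ and the multiplicative $1-O(1/n)$ factor both being $o(\log(\log n)/\log n)$) is sound. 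What your approach buys is a self-contained, elementary proof with a stronger upper bound; what the paper's approach buys is a template that transfers unchanged to Lemma~2, where the max-SINR variable $\Gamma\left(\alpha^{*}_n,\beta^{*}_n\right)$ has no closed-form CDF or expectation and only the extreme-value asymptotics are available, so Jensen and harmonic numbers would no longer help there.
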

\begin{proof}
We will use Theorem~A.2 from the Appendix of \cite{Sharif2005}, which in the case of a sequence of i.i.d. $\chi^2(2)$ random variables $x_i,i\in\{1,\ldots,n\}$ simplifies to:
\begin{equation}
\forall u\in \mathbb{R},\Prob\{w_n\leq u_n+u\}\rightarrow e^{-e^{-u+O\left(\tfrac{e^{-u}}{n}\right)}}
\label{eq:proof_sym_1}
\end{equation} 
where $w_n\triangleq \max_{i\in\{1,\ldots,n\}} x_i$ and $u_n\triangleq\log(n)$. We choose $u=\log(\log(n))$, which gives 
\begin{equation}
\Prob\left\{w_n \leq u_n+\log(\log(n))\right\}\sim 1-\frac{1}{\log(n)}
\label{eq:proof_sym_2}
\end{equation} 
and  $u=-\log(\log(n))$, to get
\begin{equation}
\Prob\left\{w_n \leq u_n-\log(\log(n))\right\}\sim\frac{1}{n}.
\label{eq:proof_sym_3}
\end{equation}
Considering $u=2\log(\log(n))$, it holds that 
\begin{equation}
\Prob\left\{w_n \leq\! u_n\!+2\log(\log(n))\right\}\!\sim\!1\!-\frac{1}{\log(n)^{2}}.
\label{eq:proof_sym_4}
\end{equation}
We have $\alpha^{\Up*}_n=\rho w_n$ with $\rho\triangleq P\gamma$, so that we write
{\small \begin{equation}
\begin{aligned}
&\E\left[R_{\Up}\left(\alpha^{\Up*}_n\right)\right]=\bigg(\Prob\left\{\alpha^{\Up*}_n\leq \rho\log(n)\!+\!\rho\log(\log(n))\right\}\\
&+\!\Prob\{\rho\log(n)\!+\!\rho\log(\log(n))\!<\!\alpha^{\Up*}_n\!\leq\! \rho\log(n)\!+2\rho\log(\log(n))\!\}\\
&+\!\Prob\{ \rho\log(n)\!+2\rho\log(\log(n))<\!\alpha^{\Up*}_n\}\bigg)
\!\log_2\!\left(1\!+\alpha^{\Up*}_n\right).
\end{aligned}
\label{eq:proof_sym_5}
\end{equation}}
which can be upper bounded using \eqref{eq:proof_sym_2} and \eqref{eq:proof_sym_4} as
{\small 
\begin{align}
&\E\left[R_{\Up}\left(\alpha^{\Up*}_n\right)\right] \nonumber\\ 
&\leq\left(1\!-\!\frac{1}{\log(n)}\!+\!o\left(\frac{1}{\log(n)}\right)\right)\log_2 \left(1\!+\!\rho\log(n)\!+\!\rho\log(\log(n))\right)\nonumber\\
&\!+\!\left(\frac{1}{\log(n)}\!+\!o\left(\frac{1}{\log(n)}\right)\right)\log_2\left(1\!+\!\rho\log(n)\!+\!2\rho\log(\log(n))\right)\nonumber\\
&\!+\!\left(\frac{1}{\log(n)^{2}}+o\left(\frac{1}{\log(n)^{2}}\right)\right)\log_2\left(1\!+\!\rho n\right)
\label{eq:proof_sym_5} 
\end{align} }
and after some simplifications yields
{ 
\begin{align}
&\E\left[R_{\Up}\left(\alpha^{\Up*}_n\right)\right]\leq\log_2\!\left(1\!+\!\rho\log(n)\right)\!+\!\log_2\!\left(\!1\!+\!\frac{\rho\log(\log(n))}{1\!+\!\rho\log(n)}\!\right)\!\!\nonumber\\
&-\!\!\frac{\log_2\!\left(1\!+\!\rho\log(n)\right)}{\log(n)}\!+\!\frac{\log_2\!\left(1\!+\!\rho\log(n)\right)}{\log(n)}\!+\!o\left(\frac{\log(\log(n))}{\log(n)}\right)\nonumber\\
&\sim\log_2\left(1+\rho\log(n)\right)+\frac{\log_2(\log(n))}{\log(n)}.
\label{eq:proof_sym_5}
\end{align}}
Using \eqref{eq:proof_sym_3}, a lower bound is derived similarly  as
{ 
\begin{align}
&\E\left[R_{\Up}\left(\alpha^{\Up*}_n\right)\right]\!=\!\bigg(\Prob\{\rho\log(n)\!-\!\rho\log(\log(n))\leq \!\alpha^{\Up*}_n\}\nonumber\\
&\!+\!\Prob\{\alpha^{\Up*}_n\!<\!\rho\log(n)\!-\!\rho\log(\log(n))\}\bigg)\log_2\left(1\!+\!\alpha^{\Up*}_n\right)\nonumber\\
&\geq\left(1\!-\!O\left(\frac{1}{n}\right)\right)\log_2\left(1\!+\!\rho\log(n)\!-\!\rho\log(\log(n))\right)\nonumber\\
&\sim\log_2\left(1+\rho\log(n)\right)+\frac{\log_2(\log(n))}{\log(n)}.
\label{eq:proof_sym_6} 
\end{align}}
\end{proof}

\begin{lemma}
In symmetric networks with the same path loss $\gamma$ from all BSs, $E\left[R\left(\alpha^{*}_n,\beta^{*}_n\right)\right]$ can be upper bounded as 
\begin{equation*} 
\E\left[R\left(\alpha^{*}_n,\beta_n^ {*}\right)\right]\leq\! f(n)-(N\!-\!1)\frac{\log_2(\log(n))}{\log(n)}\!+\! o\!\left(\!\frac{\log(\log(n))}{\log(n)}\!\right)
\label{eq:symmetric_1}
\end{equation*}
where $f(n)\triangleq \log_2(\rho\log(n))$, and lower bounded as 
\begin{equation*} 
\E\left[R\left(\alpha^{*}_n,\beta_n^ {*}\right)\right]\geq\! f(n)-(N\!+\!1)\frac{\log_2(\log(n))}{\log(n)}\!+\! o\!\left(\!\frac{\log(\log(n))}{\log(n)}\!\right)
\label{eq:symmetric_2}
\end{equation*}
\label{lemma2}
\end{lemma}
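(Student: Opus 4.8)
The plan is to follow the same three-region decomposition of $\E\left[\log_2(1+\Gamma_n)\right]$ used for Lemma~\ref{lemma1}, where I abbreviate $\Gamma_n\triangleq\Gamma\left(\alpha^{*}_n,\beta^{*}_n\right)$ and keep $\rho\triangleq P\gamma$. The one genuinely new ingredient is the extreme-value behaviour of $\Gamma_n$, which no longer reduces to the maximum of i.i.d.\ $\chi^2(2)$ variables because of the interference in the denominator; hence I cannot quote Theorem~A.2 of \cite{Sharif2005} directly and must first compute the per-user SINR tail. In the symmetric model $\alpha_k=\rho G(k)$ and $\beta_k=\rho\sum_{j=1}^{N}G_j(k)$, with all fading powers i.i.d.\ with the exponential law of Lemma~\ref{lemma1}, so that $\Prob\{G(k)>t\}=e^{-t}$ and $\E[e^{-yG_j(k)}]=(1+y)^{-1}$. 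Conditioning on the interference gives $\Prob\{\alpha_k/(1+\beta_k)>y\mid\beta_k\}=e^{-y/\rho}e^{-y\beta_k/\rho}$, and averaging over the $N$ independent interferers through the Laplace transform yields the closed form
\begin{equation}
\bar F(y)\triangleq\Prob\left\{\frac{\alpha_k}{1+\beta_k}>y\right\}=e^{-y/\rho}(1+y)^{-N}.
\end{equation}

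Next I would exploit independence across users, $\Prob\{\Gamma_n\le y\}=(1-\bar F(y))^{n}$, and evaluate it at the family of thresholds $y=\rho\log(n)-\rho c\,\log(\log(n))$. A short expansion of the polynomial factor, $(1+y)^{-N}=(\rho\log(n))^{-N}(1+O(\log(\log(n))/\log(n)))$, gives the key estimate $n\bar F(y)=\rho^{-N}(\log(n))^{\,c-N}(1+o(1))$, so that the $(1+y)^{-N}$ term shifts the centring of $\Gamma_n$ down by exactly $\rho N\log(\log(n))$ relative to the no-interference case; this is the origin of the $-N$ coefficient. Using $(1-\bar F(y))^{n}=e^{-n\bar F(y)(1+o(1))}$ I then read off, mirroring \eqref{eq:proof_sym_2}, \eqref{eq:proof_sym_4} and \eqref{eq:proof_sym_3} respectively, that $\Prob\{\Gamma_n\le\rho\log(n)-\rho(N-1)\log(\log(n))\}\sim1-\rho^{-N}/\log(n)$, that $\Prob\{\Gamma_n\le\rho\log(n)-\rho(N-2)\log(\log(n))\}\sim1-\rho^{-N}/\log(n)^{2}$, and that $\Prob\{\Gamma_n\le\rho\log(n)-\rho(N+1)\log(\log(n))\}\sim n^{-1/\rho^{N}}=o(1/\log(n))$.

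With these in hand the two bounds follow as in Lemma~\ref{lemma1}. For the upper bound I split the expectation over $\{\Gamma_n\le y_1\}$, $\{y_1<\Gamma_n\le y_2\}$ and $\{\Gamma_n>y_2\}$ with $y_1=\rho\log(n)-\rho(N-1)\log(\log(n))$ and $y_2=\rho\log(n)-\rho(N-2)\log(\log(n))$, bound $\log_2(1+\Gamma_n)$ on each region by the value at its upper endpoint (and crudely by $\log_2(1+\rho n)$ on the unbounded tail), insert the three probabilities, and expand $\log_2(1+y_1)=f(n)-(N-1)\log_2(\log(n))/\log(n)+o(\log(\log(n))/\log(n))$. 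As in the upper-bound computation for Lemma~\ref{lemma1}, the two $f(n)/(\rho^{N}\log(n))$ contributions, one from the leading region weighted by $1-\rho^{-N}/\log(n)$ and one from the middle region weighted by $\rho^{-N}/\log(n)$, cancel, leaving the claimed $f(n)-(N-1)\log_2(\log(n))/\log(n)$. For the lower bound I keep only the single region $\{\Gamma_n>y_0\}$ with $y_0=\rho\log(n)-\rho(N+1)\log(\log(n))$, whose probability is $1-o(1/\log(n))$, so that $\E[R]\ge(1-o(1/\log(n)))\log_2(1+y_0)=f(n)-(N+1)\log_2(\log(n))/\log(n)+o(\log(\log(n))/\log(n))$.

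The main obstacle is the first paragraph: getting the clean tail $\bar F(y)=e^{-y/\rho}(1+y)^{-N}$ and, more delicately, justifying the two asymptotic steps that feed the decomposition. I need the uniform control $(1-\bar F(y))^{n}=e^{-n\bar F(y)(1+o(1))}$ across all three thresholds, which is safe since $\bar F(y)\to0$ there, and I must confirm the crude top-region term is genuinely negligible; because $\Gamma_n\le\alpha^{\Up*}_n$ and $\alpha^{\Up*}_n$ has an exponential (light) tail, the contribution of $\{\Gamma_n>y_2\}$ is $O(1/\log(n))=o(\log(\log(n))/\log(n))$ and does not affect the leading correction. The slack between the two sides of the lemma, namely $-(N-1)$ versus $-(N+1)$, is exactly the $\pm\rho\log(\log(n))$ offset around the true centre $\rho\log(n)-\rho N\log(\log(n))$, inherited unchanged from the symmetric upper-bound proof.
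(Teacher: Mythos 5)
Your proposal is correct and takes essentially the same route as the paper: the paper's proof is only a sketch that invokes Lemma~4 and Corollary~A.1 of \cite{Sharif2005} to obtain exactly the three threshold probabilities at $\rho\log(n)-\rho(N-1)\log(\log(n))$, $\rho\log(n)-\rho(N-2)\log(\log(n))$ and $\rho\log(n)-\rho(N+1)\log(\log(n))$ and then repeats the region decomposition of Lemma~\ref{lemma1}, which is precisely your architecture---you merely make the cited extreme-value inputs self-contained by computing the per-user tail $\bar F(y)=e^{-y/\rho}(1+y)^{-N}$ and using $(1-\bar F(y))^{n}=e^{-n\bar F(y)(1+o(1))}$. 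The constants you obtain differ immaterially from those the paper states (you get $1-\rho^{-N}/\log(n)$ and $n^{-1/\rho^{N}}$ where the paper writes $1-1/\log(n)$ and $1/n$, the discrepancy being the $O(1)$ centring term the sketch suppresses); as you correctly observe, only the orders $1-\Theta(1/\log(n))$, $1-\Theta(1/\log(n)^{2})$ and $o(1/\log(n))$ enter the final assembly, together with the exact cancellation of the $f(n)$ cross-terms coming from complementary probabilities, so the lemma's bounds are unaffected.
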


%

%
\begin{proof}
Only a sketch of the proof is given and the full proof can be found in \cite{dekerret2011jb}. We apply Lemma~$4$ and Corollary A.1 in \cite{Sharif2005} for the single BS transmission:

{\small
\begin{align}
&\Prob\{\Gamma\left(\alpha^{*}_n,\beta^{*}_n\right)\!\leq\!\rho\log(n)\!-\!\rho(N\!-\!1)\log(\log(n))\!\}\sim 1\!-\!\frac{1}{\log(n)}\nonumber\\ 
&\Prob\{\Gamma\left(\alpha^{*}_n,\beta^{*}_n\right)\!\leq\!\rho\log(n)\!-\!\rho(N-2)\log(\log(n))\!\}\sim 1\!-\!\frac{1}{(\log(n))^2}\nonumber\\ 
&\Prob\{\Gamma\left(\alpha^{*}_n,\beta^{*}_n\right)\!\leq\!\rho\log(n)\!-\!\rho(N+1)\log(\log(n))\!)\}
\sim \frac{1}{n} .
\label{eq:proof_sym_9} 
\end{align}
} 

The proof ends using the relations from \eqref{eq:proof_sym_9} to lower and upper bound the averate rate $\E\left[R\left(\alpha^{*}_n,\beta_n^ {*}\right)\right]$ as it has been done for the no-interference upper bound in the proof of Lemma~\ref{lemma1}.
\end{proof}

\begin{corollary}
The bounds of Lemma~\ref{lemma2} hold for any nonzero path loss of the interference.
\label{corollary1}
\end{corollary}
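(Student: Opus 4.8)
The plan is to retrace the argument behind Lemma~\ref{lemma2} and isolate exactly where the interference path losses enter, then check that replacing the common value $\gamma$ by arbitrary nonzero $\gamma_1,\ldots,\gamma_N$ perturbs the result below the order of the stated correction terms. The natural starting point is the tail of the per-user SINR $S_k\triangleq\alpha_k/(1+\beta_k)$. The $S_k$ are i.i.d.\ across the $n$ users (identical path loss, independent fading), and with $\alpha_k=\rho G(k)$ for exponential direct-link fading $G(k)$ I would condition on $\beta_k=P\sum_{j=1}^{N}\gamma_j G_j(k)$ and, using $\rho=P\gamma$ and the moment generating function of the independent exponential fading, obtain
\begin{equation*}
\Prob\{S_k>x\}=\e^{-x/\rho}\,\E\!\left[\e^{-x\beta_k/\rho}\right]=\e^{-x/\rho}\prod_{j=1}^{N}\frac{1}{1+x\gamma_j/\gamma},
\end{equation*}
which, as $x\to\infty$, behaves as $C\,x^{-N}\e^{-x/\rho}$ with $C\triangleq\gamma^N/\prod_{j}\gamma_j$.

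The crucial point is that the path losses of the interference enter this tail only through the multiplicative constant $C$; the exponential rate $1/\rho$ and the polynomial order $N$ are fixed by the direct-link path loss and by the number of interferers alone. The three extreme-value estimates collected in \eqref{eq:proof_sym_9} are obtained by matching the polynomial order $N$ against the target probabilities $\Theta(1/\log(n))$, $\Theta(1/(\log(n))^{2})$ and $\Theta(1/n)$; their centering values $\rho\log(n)-\rho(N-1)\log(\log(n))$, together with the analogous expressions involving $N-2$ and $N+1$, are therefore unchanged. The constant $C$ only rescales those target probabilities, an effect equivalent to shifting each threshold by the $O(1)$ amount $\rho\log C$.

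Feeding the (possibly shifted) thresholds into the same averaging computation as in the proofs of Lemma~\ref{lemma1} and Lemma~\ref{lemma2}, the $\log(\log(n))/\log(n)$ terms of the rate come solely from the $-\rho(N-1)\log(\log(n))$ and $-\rho(N+1)\log(\log(n))$ shifts inside $\log_2(1+\cdot)$, whereas the pieces proportional to $f(n)$ cancel because the probabilities of complementary regions sum to one, a cancellation that does not see the value of $C$. The sole residue of the path losses is the constant threshold shift $\rho\log C$, which contributes $\frac{\rho\log C}{(1+\rho\log(n))\ln 2}=O(1/\log(n))$ to the rate. Since $O(1/\log(n))=o(\log(\log(n))/\log(n))$, it is swallowed by the error term and the bounds of Lemma~\ref{lemma2} reappear unchanged.

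I expect the real work to lie not in the probabilities but in controlling the expectation: one must ensure, exactly as in the region-splitting of the proof of Lemma~\ref{lemma1}, that the rare high-SINR events contribute negligibly to $\E[R]$ and that this holds uniformly in the $\gamma_j$. The cleanest way to discharge this is a monotonicity sandwich. Writing $\gamma^-\triangleq\min_j\gamma_j>0$ and $\gamma^+\triangleq\max_j\gamma_j$, every $S_k$ is squeezed between the two per-user SINRs obtained by setting all interference path losses equal to $\gamma^-$ and to $\gamma^+$; taking the maximum over users and then the expectation traps $\E[R]$ between two uniform-path-loss systems. Each of these falls under the constant-$C$ analysis above, and both return the bounds $f(n)-(N+1)\frac{\log_2(\log(n))}{\log(n)}$ and $f(n)-(N-1)\frac{\log_2(\log(n))}{\log(n)}$ up to $o(\log(\log(n))/\log(n))$, thereby establishing the corollary.
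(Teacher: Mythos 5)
Your proof is correct, and it shares one ingredient with the paper's proof but differs in the key mechanism. The paper also reduces to uniform interference path loss by sandwiching between the minimum and maximum path loss, but it then settles the uniform case $\gamma_{\text{bd}}\neq\gamma$ by pure algebra rather than asymptotics: factoring $\gamma_{\text{bd}}$ out of the denominator gives $\SINR_{\text{bd}}(k)=\theta_{\text{bd}}\,\SINR_{\text{bd,norm}}(k)$ with $\theta_{\text{bd}}=\gamma/\gamma_{\text{bd}}$, where $\SINR_{\text{bd,norm}}$ is exactly the fully symmetric SINR of Lemma~\ref{lemma2} with parameter $\rho_{\text{bd}}=P\gamma_{\text{bd}}$; since the thresholds in \eqref{eq:proof_sym_9} are linear in $\rho$, scaling them by $\theta_{\text{bd}}$ makes everything depend only on the product $\theta_{\text{bd}}\rho_{\text{bd}}=P\gamma=\rho$, so the bounds of Lemma~\ref{lemma2} carry over verbatim with no new tail analysis. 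You instead recompute the per-user SINR tail from the MGF, $\Prob\{S_k>x\}=\e^{-x/\rho}\prod_{j=1}^{N}(1+x\gamma_j/\gamma)^{-1}\sim C\,x^{-N}\e^{-x/\rho}$ with $C=\gamma^N/\prod_j\gamma_j$, and argue that the interference path losses enter only through the constant $C$, i.e., an $O(1)$ threshold shift $\rho\log C$ whose contribution $O(1/\log(n))$ to the rate is absorbed by the $o(\log(\log(n))/\log(n))$ error term. Both routes are valid. The paper's rescaling identity is shorter and reuses \eqref{eq:proof_sym_9} as a black box, but it genuinely needs the min/max sandwich, because the identity only exists when all interferers share a single path loss. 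Your tail computation is heavier---it essentially re-derives the extreme-value input that the paper imports from Lemma~4 and Corollary~A.1 of \cite{Sharif2005}---but it is more informative, since it quantifies the perturbation caused by the $\gamma_j$, and it treats arbitrary unequal $\gamma_1,\ldots,\gamma_N$ directly; as a result your concluding sandwich is logically redundant for the probabilities and serves only as a clean monotone-comparison device to control the expectation, which is a legitimate way to finish and mirrors the role the sandwich plays in the paper.
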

\begin{proof}
We will proceed by considering a lower and an upper bound for the path loss from the BSs to the UE. This is easily done by taking the maximum and the minimum of the distance between the interfering BSs. Then, for example with a lower bound for the path loss $\gamma_{\text{bd}}$, we write
\begin{equation*}
\SINR_{\text{bd}}(k)\triangleq \frac{\gamma}{\gamma_{\text{bd}}}\frac{G(k)}{\frac{1}{P\gamma_{\text{bd}}}+\sum_{j=1}^{N}G_{j}(k)}\triangleq
\theta_{\text{bd}} \SINR_{\text{bd,norm}}(k)
\end{equation*}
where we have defined $\SINR_{\text{bd,norm}}(k)$ as the SINR normalized with $\gamma_{\text{bd}}$, $\theta_{\text{bd}}\!\triangleq\!\gamma/\gamma_{\text{bd}}$, and $\rho_{\text{bd}}\!\triangleq \!P\gamma_{\text{bd}}$. The bounds in \eqref{eq:proof_sym_9} are linear in $\rho$ and are multiplied by $\theta_{\text{bd}}$ when inserted in the rates, so that they depend only on $\theta_{\text{bd}}\rho_{\text{bd}}$. We note that $\theta_{\text{bd}}\rho_{\text{bd}}\!=\!P\gamma\!=\!\rho$. As a consequence the bounds are independent of the path loss $\gamma_{\text{bd}}$, which ends the proof.
\end{proof}

\begin{theorem}
In symmetric networks, the rate difference $\Delta_R(n)$ vanishes as the number of UEs $n$ tends to infinity. To quantify the rate of convergence, we have the two following bounds:
\begin{equation}
\begin{aligned}
\Delta_R(n)&\leq (N\!+\!2)\frac{\log_2(\log(n))}{\log(n)}+ o\left(\!\frac{\log(\log(n))}{\log(n)}\!\right)\\
\Delta_R(n)&\geq (N\!-\!2)\frac{\log_2(\log(n))}{\log(n)}+ o\left(\!\frac{\log(\log(n))}{\log(n)}\!\right).
\end{aligned}
\label{eq:proof_sym_1}
\end{equation}
\label{theorem1}
\end{theorem}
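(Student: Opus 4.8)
The plan is to obtain both bounds on $\Delta_R(n)$ directly by combining the four estimates established in Lemmas~\ref{lemma1} and~\ref{lemma2}, exploiting the crucial observation that both average rates share the same leading term $f(n)=\log_2(\rho\log(n))$. Since $\Delta_R(n)=\E[R_{\Up}(\alpha^{\Up*}_n)]-\E[R(\alpha^{*}_n,\beta^{*}_n)]$ is a difference, the $f(n)$ contributions cancel identically, leaving only the second-order $\log_2(\log(n))/\log(n)$ terms. This cancellation is precisely what forces $\Delta_R(n)\to 0$, so the bulk of the argument reduces to careful bookkeeping of the coefficients of these second-order terms.

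For the upper bound I would maximize the first term and minimize the second. Concretely, I would insert the upper bound on $\E[R_{\Up}(\alpha^{\Up*}_n)]$ from Lemma~\ref{lemma1}, namely $f(n)+\log_2(\log(n))/\log(n)$, together with the lower bound on $\E[R(\alpha^{*}_n,\beta^{*}_n)]$ from Lemma~\ref{lemma2}, namely $f(n)-(N+1)\log_2(\log(n))/\log(n)$. Subtracting, the $f(n)$ terms cancel and the coefficients add as $1+(N+1)=N+2$, yielding the claimed upper bound. For the lower bound I would do the reverse: combine the lower bound on $\E[R_{\Up}(\alpha^{\Up*}_n)]$, that is $f(n)-\log_2(\log(n))/\log(n)$, with the upper bound on $\E[R(\alpha^{*}_n,\beta^{*}_n)]$, that is $f(n)-(N-1)\log_2(\log(n))/\log(n)$. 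The $f(n)$ terms again cancel and the coefficients combine as $-1+(N-1)=N-2$, giving the lower bound.

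The final step is to verify that the remainder terms compose correctly. Each of the four estimates carries an $o(\log(\log(n))/\log(n))$ remainder, and since all four are of the same order, their signed sum is again $o(\log(\log(n))/\log(n))$ and is absorbed into the stated remainders. I do not anticipate a genuine obstacle here: the only subtlety is ensuring that both lemmas are invoked with the matching scaling $\rho=P\gamma$ inside $f(n)$, which holds by construction, and that by Corollary~\ref{corollary1} the result is insensitive to the specific nonzero interference path loss. Once these points are checked, the two displayed inequalities follow immediately, and the convergence $\Delta_R(n)\to 0$ is a direct consequence of both bounds tending to zero.
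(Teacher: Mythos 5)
Your proposal is correct and is exactly the paper's argument: the paper's proof of Theorem~\ref{theorem1} consists of the single sentence that it ``follows directly from the lemmas,'' and your combination (upper bound of Lemma~\ref{lemma1} with lower bound of Lemma~\ref{lemma2} giving coefficient $1+(N+1)=N+2$, and the reverse pairing giving $-1+(N-1)=N-2$) is precisely that direct derivation, with the $f(n)$ cancellation and the absorption of the $o(\log(\log(n))/\log(n))$ remainders handled correctly.
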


\begin{proof}
The proof follows directly from the lemmas.
\end{proof}

\begin{corollary}
The interference $\beta^{*}_n$ tend almost surely to zero.
\label{corollary1}
\end{corollary}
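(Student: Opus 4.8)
The plan is to bound the interference of the scheduled user by comparing the max-SINR value against the no-interference benchmark $\alpha_n^{\Up*}$ of Lemma~\ref{lemma1}. Let $k^{*}$ denote the scheduled index, so that $\alpha_n^{*}=\alpha_{k^{*}}$ and $\beta_n^{*}=\beta_{k^{*}}$, and abbreviate the scheduled SINR as $\Gamma_n\triangleq\Gamma(\alpha_n^{*},\beta_n^{*})$. The defining identity $\Gamma_n=\alpha_n^{*}/(1+\beta_n^{*})$ rearranges into $\beta_n^{*}=\alpha_n^{*}/\Gamma_n-1$. Since $\alpha_{k^{*}}$ is one of the direct gains it cannot exceed their maximum, i.e. $\alpha_n^{*}\leq\alpha_n^{\Up*}$, and since $\beta_n^{*}\geq 0$ we obtain the sandwich
\begin{equation*}
0\leq\beta_n^{*}\leq\frac{\alpha_n^{\Up*}}{\Gamma_n}-1 .
\end{equation*}
It therefore suffices to prove that $\alpha_n^{\Up*}/\Gamma_n\to 1$ almost surely, and I would do this by showing that numerator and denominator both grow like $\rho\log(n)$ almost surely.

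The first ingredient is the almost sure rate $\alpha_n^{\Up*}/(\rho\log(n))\to 1$, the pathwise counterpart of the distributional statement underlying Lemma~\ref{lemma1}. For the upper bound I would apply Borel--Cantelli to the individual fading gains: the probability that $G(k)$ exceeds $(1+\epsilon)$ times its centering $\log(k)$ is summable in $k$, so eventually every gain stays below $(1+\epsilon)\log(k)\leq(1+\epsilon)\log(n)$, whence $\limsup_n\alpha_n^{\Up*}/(\rho\log(n))\leq 1+\epsilon$. The matching lower bound is even easier, since $\Prob\{\alpha_n^{\Up*}<(1-\epsilon)\rho\log(n)\}$ decays faster than any polynomial in $n$ and is therefore summable; letting $\epsilon\downarrow 0$ along a countable sequence gives the claim.

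The second and decisive ingredient is $\Gamma_n/(\rho\log(n))\to 1$ almost surely. The upper bound is immediate from $\Gamma_n\leq\alpha_n^{\Up*}$ together with the previous paragraph. For the lower bound I would discard the high-interference users and keep only $S_n\triangleq\{k\leq n:\beta_k\leq\epsilon'\}$ for a fixed $\epsilon'>0$; by the strong law of large numbers $|S_n|\sim p\,n$ almost surely, with $p\triangleq\Prob\{\beta_1\leq\epsilon'\}>0$. Every retained user obeys $\alpha_k/(1+\beta_k)\geq\alpha_k/(1+\epsilon')$, so $\Gamma_n\geq(1+\epsilon')^{-1}\max_{k\in S_n}\alpha_k$. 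Because $\alpha_k$ and $\beta_k$ are independent in the symmetric model, the direct gains indexed by $S_n$ are again i.i.d. with the original law, so $\max_{k\in S_n}\alpha_k\sim\rho\log(|S_n|)\sim\rho\log(n)$ almost surely by the same extreme-value asymptotics. Hence $\liminf_n\Gamma_n/(\rho\log(n))\geq(1+\epsilon')^{-1}$, and letting $\epsilon'\downarrow 0$ along a countable sequence yields $\liminf_n\Gamma_n/(\rho\log(n))\geq 1$.

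Combining the two ingredients gives $\alpha_n^{\Up*}/\Gamma_n\to 1$ almost surely, and the sandwich then forces $\beta_n^{*}\to 0$ almost surely, as claimed. The Borel--Cantelli tail estimates and the strong-law control of $|S_n|$ are routine; the genuine obstacle is the lower bound on $\Gamma_n$, which must be secured almost surely rather than merely in distribution. The subtlety is that the additive fluctuations of the extremes are only of order $\log(\log(n))$ and their deviation probabilities are $O(1/\log(n))$, which is not summable, so one cannot control the additive gap $\alpha_n^{\Up*}-\Gamma_n$ pathwise by a naive Borel--Cantelli argument along $n$. Passing to the multiplicative ratio and isolating the linearly many low-interference users is precisely what reduces the problem to the robust rate $\sim\rho\log(n)$, for which the relevant deviation probabilities do become summable.
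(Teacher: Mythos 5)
Your proof is correct and takes a genuinely different route from the paper's. The paper obtains the corollary by contradiction from Theorem~\ref{theorem1}: assuming the interference does not vanish, i.e., $\Prob\{\beta_n^*\geq\beta_\infty^*\}=P_1>0$ for some fixed $\beta_\infty^*>0$ and all large $n$, it bounds the rate gap as $\Delta_R(n)=\E[\log_2(\alpha_n^{\Up*}/\Gamma(\alpha_n^*,\beta_n^*))]+o(1)\geq\E[\log_2(1+\beta_n^*)]+o(1)\geq P_1\log_2(1+\beta_\infty^*)+o(1)$, which stays bounded away from zero and contradicts $\Delta_R(n)\rightarrow 0$. You never invoke $\Delta_R(n)$ or Theorem~\ref{theorem1}: you work pathwise, sandwiching $0\leq\beta_n^*\leq\alpha_n^{\Up*}/\Gamma_n-1$ (with your $\Gamma_n\triangleq\Gamma(\alpha_n^*,\beta_n^*)$) and proving that numerator and denominator both grow like $\rho\log(n)$ almost surely, via Borel--Cantelli for $\alpha_n^{\Up*}$ and via thinning to the linearly many low-interference users (legitimate by the independence of $(\alpha_k)_k$ and $(\beta_k)_k$ together with the SLLN) for $\Gamma_n$. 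What each approach buys: the paper's proof is three lines given Theorem~\ref{theorem1}, but it inherits that theorem's reliance on Lemma~\ref{lemma2}, whose proof is only sketched, and its contradiction hypothesis is really a negation of convergence in probability rather than of almost sure convergence, so as written the expectation argument delivers $\beta_n^*\rightarrow 0$ in probability and needs an additional (routine, but absent) reverse-Fatou/subsequence step to reach the a.s. claim. Your proof is longer but self-contained, proves exactly the almost sure statement as claimed, and yields the pathwise growth rates $\alpha_n^{\Up*}\sim\Gamma_n\sim\rho\log(n)$ as a by-product; moreover, your closing remark correctly pinpoints the real obstacle --- the $O(1/\log(n))$ deviation probabilities of the additive extreme-value fluctuations are not summable, so only the multiplicative ratio combined with the low-interference thinning is amenable to Borel--Cantelli.
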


\begin{proof}
We will show that assuming the interference not to tend almost surely (a.s.) to zero leads to a contradiction:
\begin{equation*} 
\exists\beta^*_{\infty}>0,K\in \mathbb{N}:n\geq K\rightarrow \Pr\{\beta^*\geq\beta^*_{\infty}\}=P_1>0.
\end{equation*} 
Then, for $n\geq K$,
\begin{equation}
\begin{aligned}
\Delta_R(n)&=\E\left[\log_2\left(\frac{\alpha^{\Up*}_n}{\Gamma\left(\alpha^{*}_n,\beta^{*}_n\right)}\right)\right]+o(1)\\ 
				&\geq \E\left[\log_2\left(1+\beta^{*}_n\right)\right]+o(1)\\
				&\geq P_{1}\log_2\left(1+\beta^*_{\infty}\right)+o(1)>0
\end{aligned}
\label{eq:proof_sym_12} 
\end{equation} 
where the first inequality was obtained because $\alpha^{*}_n\leq \alpha^{\Up*}_n$ and the second inequality holds because the function in the expectation is always non negative. The last line is in contradiction with Theorem~\ref{theorem1}, which ends the proof.   
\end{proof}

\section{Asymptotic Analysis in Asymmetric Network}

We start by showing a useful lemma, which reads as follows.
\begin{lemma}
In asymmetric networks, the average rate for the no-interference upper bound is asymptotically equivalent to
\begin{equation}
\E\left[R_{\Up}\left(\alpha^{\Up*}_n\right)\right]\sim\frac{\varepsilon}{2}\log(n).
\label{eq:asymmetric_proof_3}
\end{equation}  
\label{lemma3}
\end{lemma}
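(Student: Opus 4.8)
The plan is to reduce the statement to an extreme-value computation governed entirely by the heavy tail that the path loss $\gamma(k)=\lambda d(k)^{-\varepsilon}$ injects into the direct-link gain. Since the UEs are uniform in the disc of radius $R$, the distances $d(k)$ are i.i.d.\ with density $2r/R^{2}$ on $[0,R]$, while the fading variables $G(k)$ are i.i.d.\ (exponential under Rayleigh fading) and independent of the distances. Writing $\alpha_{k}=P\lambda\,d(k)^{-\varepsilon}G(k)$, everything hinges on the tail of a single $\alpha$, so I would first establish that $\alpha$ is regularly varying at infinity.

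The key step is the tail estimate $\Prob\{\alpha>x\}\sim K\,x^{-2/\varepsilon}$ as $x\to\infty$ for an explicit constant $K$. One clean route is to note that $Y\triangleq d^{-\varepsilon}$ has a Pareto tail: the near-origin behaviour $f_{d}(r)\approx 2r/R^{2}$ gives $\Prob\{Y>y\}\sim y^{-2/\varepsilon}/R^{2}$, a regularly varying tail of index $2/\varepsilon$. As $\alpha=P\lambda\,G\,Y$ is the product of the light-tailed factor $P\lambda G$ and the heavy-tailed $Y$, Breiman's lemma yields $\Prob\{\alpha>x\}\sim(P\lambda)^{2/\varepsilon}\,\E[G^{2/\varepsilon}]\,R^{-2}\,x^{-2/\varepsilon}$; the same constant can be produced directly by a Laplace/Watson analysis of $\int_{0}^{R}e^{-xr^{\varepsilon}/(P\lambda)}(2r/R^{2})\,dr$, which exposes the $\Gamma(2/\varepsilon)$ factor. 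This places $\alpha$ in the Fr\'echet max-domain of attraction with index $2/\varepsilon$.

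With the tail in hand, I would invoke extreme-value theory for the maximum $\alpha^{\Up*}_{n}=\max_{k}\alpha_{k}$. The natural normalisation is the sequence $b_{n}$ defined by $\Prob\{\alpha>b_{n}\}=1/n$, i.e.\ $b_{n}=(Kn)^{\varepsilon/2}$, so that $\alpha^{\Up*}_{n}/b_{n}$ converges in distribution to a Fr\'echet law and $\log_{2}b_{n}\sim(\varepsilon/2)\log n$. Decomposing $R_{\Up}(\alpha^{\Up*}_{n})=\log_{2}(1+\alpha^{\Up*}_{n})=\log_{2}b_{n}+\log_{2}(\alpha^{\Up*}_{n}/b_{n})+o(1)$, the deterministic first term already carries the claimed growth, while the random second term converges in distribution to the logarithm of a Fr\'echet variable and should contribute only $O(1)$.

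The delicate point, which I expect to be the main obstacle, is passing from this distributional statement to the expectation: for the physically relevant regime $\varepsilon>2$ the index $2/\varepsilon<1$ makes $\alpha$ and hence $\alpha^{\Up*}_{n}$ have infinite mean, so convergence in distribution alone does not control $\E[\,\cdot\,]$. Rather than prove full uniform integrability I would sandwich directly, exploiting that the logarithm tames the tail and that only an $o(\log n)$ correction is needed. For the lower bound, restricting the expectation to the event $\{\alpha^{\Up*}_{n}\ge b_{n}/\log n\}$, whose probability tends to $1$, and using $\log_{2}(b_{n}/\log n)\sim(\varepsilon/2)\log n$, gives $\E[R_{\Up}(\alpha^{\Up*}_{n})]\ge(1-o(1))(\varepsilon/2)\log n$. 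For the upper bound I would split at a threshold $T_{n}=b_{n}(\log n)^{\varepsilon/2}$: on the bulk $\{\alpha^{\Up*}_{n}\le T_{n}\}$ one has $\log_{2}(1+\alpha^{\Up*}_{n})\le\log_{2}(1+T_{n})\sim(\varepsilon/2)\log n$, while the contribution of the rare event $\{\alpha^{\Up*}_{n}>T_{n}\}$ is handled by integrating its tail---bounded via the union bound $\Prob\{\alpha^{\Up*}_{n}>t\}\le n\,\Prob\{\alpha>t\}$ against the $t^{-2/\varepsilon}$ decay---which against the merely logarithmic integrand is only $O(1)=o(\log n)$. Combining the two bounds yields $\E[R_{\Up}(\alpha^{\Up*}_{n})]\sim(\varepsilon/2)\log n$.
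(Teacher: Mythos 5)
Your proof is correct, and it reaches the result by a more self-contained route than the paper. The paper's own proof is only a sketch: it imports the Fr\'echet limit $\lim_{n\to\infty}\Prob\{\alpha^{\Up*}_n\leq \lambda \E[Y^{2/\varepsilon}]^{\varepsilon/2}n^{\varepsilon/2}t\}=e^{-t^{-2/\varepsilon}}$ ready-made from Lemma~5 of \cite{Gesbert2011}, substitutes the $n$-dependent choices $t=\log(n)$ and $t=1/\log(n)$ into that fixed-$t$ limit to get quantile estimates, and then repeats the event-splitting of Lemma~\ref{lemma1}, deferring all details to \cite{dekerret2011jb}. You instead derive the key tail behaviour from first principles --- the uniform-in-the-disc geometry gives $\Prob\{d^{-\varepsilon}>y\}= y^{-2/\varepsilon}/R^2$, and Breiman's lemma (or equivalently your Laplace-integral computation, which indeed produces $\E[G^{2/\varepsilon}]=\Gamma(1+2/\varepsilon)$ for exponential $G$) transfers this regular variation to $\alpha=P\lambda\,G\,d^{-\varepsilon}$ --- and you control the maximum non-asymptotically via the union bound $\Prob\{\alpha^{\Up*}_n>t\}\leq n\Prob\{\alpha>t\}$ rather than via the limiting distribution evaluated at moving arguments. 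This buys two things the paper's sketch does not supply: (i) it avoids the formal gap of plugging $n$-dependent thresholds into a pointwise-in-$t$ limit theorem, which strictly requires a uniformity or rate-of-convergence argument; and (ii) it squarely addresses the integrability issue you correctly identify --- for $\varepsilon>2$ the index $2/\varepsilon<1$ makes $\E[\alpha^{\Up*}_n]=\infty$, so distributional convergence alone says nothing about $\E[R_{\Up}]$, and your explicit tail integration $\int_{T_n}^{\infty}n\Prob\{\alpha>t\}(1+t)^{-1}\diffd t=O(1/\log n)$ is exactly the estimate needed to make the upper bound rigorous, a point the paper handles implicitly at best. The only blemish, shared with the paper, is the casual mixing of $\log_2$ (in the rate) and $\log$ (in the asymptotic statement), which affects the constant by a factor $\ln 2$ depending on the convention; this is a notational issue, not a gap in your argument.
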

\begin{proof}
Only a sketch of the proof is given and the full proof can be found in \cite{dekerret2011jb}. From Lemma~$5$ in \cite{Gesbert2011}, we get 
\begin{equation*}
\lim_{n\rightarrow \infty}\Prob\{\alpha^{\Up*}_n\leq \lambda \E[Y^{\frac{2}{\varepsilon}}]^{\frac{\varepsilon}{2}}n^{\frac{\varepsilon}{2}}t\}=e^{-t^{-\tfrac{2}{\varepsilon}}},\forall t>0.
\label{eq:asymmetric_proof_1}
\end{equation*}
Using $t=\log(n)$ and $t=1/\log(n)$, we obtain
\begin{equation}
\begin{aligned}
\lim_{n\rightarrow \infty} \Prob\{\alpha^{\Up*}_n&\leq \beta E[Y^{\frac{2}{\varepsilon}}]^{\frac{\varepsilon}{2}}n^{\frac{\varepsilon}{2}}\log(n)\}\!\sim\!1-\tfrac{1}{(\log(n))^{\tfrac{2}{\varepsilon}}}\\
\lim_{n\rightarrow \infty}\Prob\{\alpha^{\Up*}_n&\leq \beta E[Y^{\frac{2}{\varepsilon}}]^{\frac{\varepsilon}{2}}n^{\frac{\varepsilon}{2}}\log(n)^{-1}\}\!\sim \frac{1}{n^{\tfrac{2}{\varepsilon}}} .
\end{aligned}
\label{eq:asymmetric_proof_2}
\end{equation}
We then proceed as for Lemma~\ref{lemma1}, i.e., we compute a lower and an upper bound for the average rate using \eqref{eq:asymmetric_proof_2}.  
\end{proof}

\begin{theorem}
The average interference after max-SINR scheduling $\E[\beta^{*}_n]$ converge to a positive constant $\beta^{*}_{\infty}$.
\label{theorem2}
\end{theorem}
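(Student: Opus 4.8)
The plan is to show that, in the asymmetric model, the max-SINR scheduler is forced to select a user sitting arbitrarily close to the serving BS; the interference path loss then freezes at a strictly positive level, while the finitely many competing interfering fadings cannot all be suppressed. I would organize the argument in three steps.

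\textbf{Step 1 (localization of the scheduled user).} First I would prove that the distance $d(k^{*})$ of the scheduled user $k^{*}\triangleq\argmax_{k}\alpha_k/(1+\beta_k)$ to its serving BS tends to $0$ in probability. Let $k_{\alpha}\triangleq\argmax_{k}\alpha_k$ be the no-interference winner and set $\gamma_j^{\max}\triangleq\lambda(D_j-R)^{-\varepsilon}$, where $D_j$ is the distance from the serving BS to the $j$-th interfering BS; since every user lies in the disc of radius $R$, each $\gamma_j(k)\leq\gamma_j^{\max}$, so $\beta_{k_{\alpha}}=P\sum_j\gamma_j(k_{\alpha})G_j(k_{\alpha})$ is bounded in probability. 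Because $\Gamma\left(\alpha^{*}_n,\beta^{*}_n\right)\geq\alpha^{\Up*}_n/(1+\beta_{k_{\alpha}})$ and $\alpha^{\Up*}_n$ is of order $n^{\varepsilon/2}$ by Lemma~\ref{lemma3}, the achievable SINR diverges. Writing $\Gamma=\alpha_{k^{*}}/(1+\beta_{k^{*}})\leq\alpha_{k^{*}}=P\lambda\,d(k^{*})^{-\varepsilon}G(k^{*})$ and bounding $G(k^{*})\leq\max_k G(k)=O(\log n)$, I obtain $d(k^{*})^{-\varepsilon}\gtrsim n^{\varepsilon/2}/\log(n)$, i.e. $d(k^{*})=O\big((\log n)^{1/\varepsilon}n^{-1/2}\big)\to0$.

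\textbf{Step 2 (freezing of the path loss, decoupling, and uniform integrability).} Put $\gamma_j^{\infty}\triangleq\lambda D_j^{-\varepsilon}>0$. Since $d(k^{*})\to0$, one has $d_j(k^{*})=D_j+O(d(k^{*}))$ and hence $\gamma_j(k^{*})=\gamma_j^{\infty}\big(1+O(n^{-1/2})\big)\to\gamma_j^{\infty}$. The crucial consequence is a decoupling: for the near-BS candidates $\beta_k=P\sum_j\gamma_j^{\infty}G_j(k)\,(1+o(1))$ depends asymptotically only on the interfering fadings $\{G_j(k)\}$, which are independent of $\alpha_k$ (a function of the position and of the direct fading $G(k)$). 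I would also record the deterministic ordering $\beta^{*}_n\leq\beta_{k_{\alpha}}$: from $\alpha_{k^{*}}/(1+\beta_{k^{*}})\geq\alpha_{k_{\alpha}}/(1+\beta_{k_{\alpha}})$ and $\alpha_{k^{*}}\leq\alpha_{k_{\alpha}}$ one gets $1+\beta_{k^{*}}\leq(\alpha_{k^{*}}/\alpha_{k_{\alpha}})(1+\beta_{k_{\alpha}})\leq1+\beta_{k_{\alpha}}$. As $k_{\alpha}$ is selected independently of the interfering fadings, $\beta_{k_{\alpha}}$ is stochastically dominated, uniformly in $n$, by the fixed integrable variable $P\sum_j\gamma_j^{\max}G_j$ with i.i.d. unit-mean $G_j$; this yields uniform integrability of $\{\beta^{*}_n\}$ together with $\limsup_n\E[\beta^{*}_n]\leq P\sum_j\gamma_j^{\infty}<\infty$.

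\textbf{Step 3 (convergence, positivity, and the main obstacle).} It remains to show that $\beta^{*}_n$ converges in distribution; combined with the uniform integrability of Step~2 this upgrades to convergence of $\E[\beta^{*}_n]$. I would obtain the limit from a joint extreme-value argument: after rescaling $\tilde\alpha_k\triangleq\alpha_k/c_n$ by the normalizing sequence $c_n$ of Lemma~\ref{lemma3}, the top order statistics of $\{\tilde\alpha_k\}$ converge to a Fréchet Poisson point process which, by Step~2, is marked by asymptotically independent interference powers $\beta_i^{\infty}\triangleq P\sum_j\gamma_j^{\infty}G_{j,i}$. Since the common factor $c_n$ cancels, the scheduled user is the point maximizing $\tilde\alpha_i/(1+\beta_i^{\infty})$ in this limiting marked process, so $\beta^{*}_n\to\beta^{*}_{\infty}\triangleq\beta^{\infty}_{I^{*}}$ in distribution, with $I^{*}$ the maximizing index. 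Positivity follows because only finitely many candidates carry comparable $\tilde\alpha_i$ and their $N$ interfering fadings cannot be simultaneously driven to zero, whence $\beta^{*}_{\infty}>0$ almost surely and $\E[\beta^{*}_n]\to\E[\beta^{\infty}_{I^{*}}]>0$. I expect Step~3 to be the main obstacle: the delicate point is making the decoupling precise, i.e. showing that the $O(n^{-1/2})$ position-induced correlation between $\alpha_k$ and $\beta_k$ does not survive the selection of the extreme order statistics, and establishing the marked point-process convergence rigorously rather than heuristically. The localization of Step~1 and the domination bound of Step~2 are, by contrast, comparatively routine.
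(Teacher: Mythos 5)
Your proposal takes a genuinely different route from the paper. The paper's proof is an elementary contradiction argument that never touches extreme-value point processes: assuming $\E[\beta^{*}_n]\rightarrow 0$ (hence, as the paper argues, $\beta^{*}_n\rightarrow 0$ a.s.), the max-SINR scheduler is asymptotically confined to the sub-population of users with $\beta_k<\eta$, which has relative size $P_{\eta}\rightarrow 0$ as $\eta\rightarrow 0$; the achievable average rate is then at most $\tfrac{\varepsilon}{4}\log_2(P_{\eta}n)+g(n)$, while the max-Gain scheduler already guarantees $\tfrac{\varepsilon}{4}\log_2(n)+g(n)-C_{\text{LB}}$ with $C_{\text{LB}}\triangleq\E[\log_2(1+\beta_k)]$ a fixed constant, so that a small enough $P_{\eta}$ drives the upper bound below the lower bound --- a contradiction. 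Convergence of $\E[\beta^{*}_n]$ is then obtained by a monotonicity argument ($\E[\beta^{*}_n]$ non-increasing in $n$ and bounded away from zero). Your approach instead aims directly at the limiting object. Your Steps 1 and 2 are correct and essentially rigorous; in particular the localization $d(k^{*})=O_P\big((\log n)^{1/\varepsilon}n^{-1/2}\big)$, the clean deterministic ordering $\beta^{*}_n\leq\beta_{k_{\alpha}}$, and the resulting uniform integrability are ingredients the paper does not have, and, if completed, your Step 3 would identify the limiting distribution of $\beta^{*}_n$ rather than merely show that its mean stays away from zero.

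However, as you yourself flag, Step 3 is where the theorem actually lives, and as written it is a program rather than a proof. The convergence of the marked point process $\sum_k\delta_{(\tilde\alpha_k,\beta_k)}$ to a Fr\'echet Poisson process with i.i.d. marks requires a genuine argument: the marks are only asymptotically decoupled from the points (through the localization), and the selection functional $I^{*}=\argmax_i\tilde\alpha_i/(1+\beta_i^{\infty})$ must be shown to be a.s. well defined, a.s. unique, and continuous in the appropriate topology before the continuous mapping theorem applies. None of this is routine, so judged as a complete proof your proposal has a gap exactly at the step that distinguishes the asymmetric case from the symmetric one. The instructive comparison is that the paper sidesteps this machinery entirely: to prove that interference cannot vanish, it suffices to price out the cost of forcing it to vanish --- a rate loss of $\tfrac{\varepsilon}{4}\log_2(P_{\eta})$, unbounded as $\eta\rightarrow 0$, against a bounded gain of $C_{\text{LB}}$. (To be fair, the paper's own proof is also informal in places: convergence in mean does not imply a.s. convergence, and the monotonicity of $\E[\beta^{*}_n]$ is asserted rather than proven; your uniform-integrability-plus-weak-convergence scheme is precisely the kind of argument a fully rigorous version of the convergence claim would need.)
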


\begin{proof}
We will show the theorem by contradiction. Let assume that $\beta^{*}_n$ converge in average to zero. It implies that it converges a. s. to zero, which is written as
\begin{equation}
\forall \eta>0, \exists n_{\eta}>0, n>n_{\eta}\Rightarrow \beta^{*}_n<\eta,\text{ a. s. }
\label{eq:asymmetric_proof_4}
\end{equation}
Since $(\beta_k)_k$ is i.i.d. non negative, we have that $\Pr\{\beta_k<\eta\}=P_{\eta}$, with $P_{\eta}$ tending to zero if $\eta$ tends to zero. We now choose arbitrarily an $\eta>0$ and $n>n_{\eta}$, such that
\begin{equation}
\begin{aligned}
\Gamma\left(\alpha_n^{*},\beta_n^{*}\right)&\stackrel{a.s.}{=}\max_{k\in[1,n]}\left(\frac{\alpha_k}{1+\beta_k}\right),\text{ s.t. $\beta_k<\eta$} \\
&\leq\max_{k\in[1,n]} \alpha_k ,\text{ subject to $\beta_k<\eta$}.
\end{aligned}
\label{eq:asymmetric_proof_5}
\end{equation}
Inserting \eqref{eq:asymmetric_proof_5} in the average rate, we get
\begin{equation*}
\begin{aligned}
\E\left[R\left(\alpha_n^{*},\beta_n^{*}\right)\right]
&\leq\E\left[\log_2(1+\max_{k\in[1,n]}\alpha_k)\right],\text{  s.t. $\beta_k<\eta$}\\
&=\E\left[\log_2\left(\alpha^{\Up*}_{P_{\eta}n}\right)\right]+o(1)\\ 
&=\tfrac{\varepsilon}{4}\log_2(P_{\eta}n)\!+\!\tfrac{\varepsilon}{4}\log_2(P_{\eta}n)\!+\!\varepsilon(P_{\eta}n)\!+\!o(1)
\end{aligned}
\label{eq:asymmetric_proof_6}
\end{equation*}
where $\varepsilon(n)=o(\log(n)$. We define $g(n)\triangleq\tfrac{\varepsilon}{4}\log_2(n)+\varepsilon(n)$ which is increasing in $n$, for $n$ large enough. We then have 
\begin{equation*} 
\E\left[R\left(\alpha_n^{*},\beta_n^{*}\right)\right]\leq\tfrac{\varepsilon}{4}\log_2(P_{\eta}n)\!+\!g(P_{\eta}n) \leq\tfrac{\varepsilon}{4}\log_2(P_{\eta}n)\!+\!g(n). 
\label{eq:asymmetric_proof_6}
\end{equation*}
On the other side, we consider the lower bound obtained with a scheduler maximizing only the gain of the direct link:
\begin{equation*}
\begin{aligned}
\E\left[R\left(\alpha_n^{*},\beta_n^{*}\right)\right]
&\geq\E\left[\log_2\left(1+\frac{\max_{k\in[1,n]}\alpha_k}{1+\beta_k}\right)\right]\\
&=\E\left[\log_2\left(\alpha^{\Up*}_{n}\right)\right]-\E\left[\log(1+\beta_k)\right]+o(1)\\
&= \frac{\varepsilon}{4}\log_2(n)+g(n)-C_{\text{LB}}+o(1)
\end{aligned}
\label{eq:asymmetric_proof_7}
\end{equation*}
where we have defined the constant $C_{\text{LB}}\triangleq \E\left[\log_2(1+\beta_k)\right]$. The difference between the bounds is then equal to $\varepsilon/4\log_2(P_{\eta})+C_{\text{LB}}$. It holds for any $P_{\eta}$ and we can find $P_{\eta}$ so that the difference becomes negative. This is a contradiction, and we conclude that $\beta_n^*$ does not converge to zero. 

Moreover, the SINR is decreasing in $\beta_k$ and $(\beta_k)_k$ is statistically independent from $(\alpha_k)_k$. Thus, the max-SINR scheduler improves the distribution of $\beta_k$ in order to reduce its value and increasing the number of UEs can only lead to a reduction of $\E[\beta_n^{*}]$. $\E[\beta_n^{*}]$ is lower bounded by a positive number and non increasing, thus converges to a positive value.
\end{proof}

\begin{corollary}
The average rate difference $\Delta_R(n)$ does not tend to zero as $n\rightarrow \infty$.
\label{corollary1}
\end{corollary}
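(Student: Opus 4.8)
The plan is to reuse, in the reverse logical direction, the mechanism of the interference corollary from the symmetric section. The starting point is the elementary observation that the no-interference scheduler can only do better on the direct-link gain than the max-SINR scheduler, i.e. $\alpha^{\Up*}_n \geq \alpha^*_n$. Writing the scheduled SINR as $\Gamma(\alpha^*_n,\beta^*_n)=\alpha^*_n/(1+\beta^*_n)$ and forming the ratio $\alpha^{\Up*}_n/\Gamma(\alpha^*_n,\beta^*_n)=\alpha^{\Up*}_n(1+\beta^*_n)/\alpha^*_n\geq 1+\beta^*_n$ immediately yields the pointwise bound $\log_2(\alpha^{\Up*}_n)-\log_2(\Gamma(\alpha^*_n,\beta^*_n))\geq\log_2(1+\beta^*_n)$. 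By Lemma~\ref{lemma3} the no-interference SINR grows without bound (like $n^{\varepsilon/2}$), and the scheduled SINR grows at the same logarithmic rate (as seen in the bounds established in the proof of Theorem~\ref{theorem2}), so the additive ones inside the two rates are asymptotically negligible and $\E[R_{\Up}]-\E[R]=\E[\log_2(\alpha^{\Up*}_n/\Gamma)]+o(1)$. Combining the two facts gives
\begin{equation*}
\Delta_R(n)\geq\E\left[\log_2\left(1+\beta^*_n\right)\right]+o(1),
\end{equation*}
which is exactly the inequality already exploited in the symmetric case, only now used to produce a positive lower bound rather than a contradiction.

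It then remains to show that the right-hand side stays bounded away from zero. I would package this by contradiction: if $\Delta_R(n)\to 0$, then $\E[\log_2(1+\beta^*_n)]\to 0$, and since $\log_2(1+\cdot)$ is nonnegative and strictly increasing this forces $\beta^*_n\to 0$ in probability and hence, after controlling the tail, $\E[\beta^*_n]\to 0$. This directly contradicts Theorem~\ref{theorem2}, which asserts that $\E[\beta^*_n]$ is non-increasing and converges to a positive constant $\beta^*_{\infty}>0$. The conclusion that $\Delta_R(n)$ does not tend to zero follows.

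I expect the main obstacle to be precisely the last implication, namely passing from $\E[\log_2(1+\beta^*_n)]\to 0$ back to $\E[\beta^*_n]\to 0$: a vanishing $\log$-expectation controls only the bulk of $\beta^*_n$ and not its upper tail, so in principle a rare large excursion could keep the mean positive while the $\log$-mean vanishes. I would close this gap using that $\log_2(1+x)/x$ is decreasing, which gives, for every truncation level $M$,
\begin{equation*}
\E\left[\log_2\left(1+\beta^*_n\right)\right]\geq\frac{\log_2(1+M)}{M}\,\E\left[\min(\beta^*_n,M)\right].
\end{equation*}
Because the scheduler-selected interference is uniformly integrable — its mean is bounded and non-increasing in $n$ by Theorem~\ref{theorem2}, and each $\beta_k$ has finite moments under the Rayleigh-fading, bounded-distance model — the contribution of the event $\{\beta^*_n>M\}$ to the mean is uniformly small for $M$ large, so $\E[\min(\beta^*_n,M)]$ stays uniformly close to $\E[\beta^*_n]\geq\beta^*_{\infty}$. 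Choosing $M$ large therefore turns the positive limiting mean into a strictly positive lower bound on $\liminf_n\E[\log_2(1+\beta^*_n)]$, which by the first displayed inequality forces $\liminf_n\Delta_R(n)>0$ and completes the argument.
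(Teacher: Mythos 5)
Your proposal takes the same route as the paper's own proof: the identical starting inequality $\Delta_R(n)\geq\E\left[\log_2\left(1+\beta^*_n\right)\right]+o(1)$, obtained from $\alpha^*_n\leq\alpha^{\Up*}_n$ after discarding the additive ones, followed by a contradiction with Theorem~\ref{theorem2}. Where you genuinely add something is the final step. The paper disposes of it in one sentence --- the expectation of a nonnegative function is zero only if the function vanishes almost everywhere --- which strictly speaking rules out $\E\left[\log_2\left(1+\beta^*_n\right)\right]$ being \emph{exactly} zero, whereas what must be excluded is that it \emph{tends} to zero along the sequence while $\E[\beta^*_n]\rightarrow\beta^*_{\infty}>0$. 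You correctly identify that this implication is not automatic (your rare-excursion scenario is exactly the right counterexample: $\beta^*_n=n$ with probability $1/n$ has unit mean but vanishing log-mean), and your truncation bound
\begin{equation*}
\E\left[\log_2\left(1+\beta^*_n\right)\right]\geq\frac{\log_2(1+M)}{M}\,\E\left[\min\left(\beta^*_n,M\right)\right]
\end{equation*}
closes that gap, so your write-up is in fact tighter than the paper's. One caveat: your justification of uniform integrability is not sufficient as stated, because a bounded (even non-increasing) mean does not imply uniform integrability --- that is precisely what your own counterexample demonstrates. The clean fix is the stochastic-dominance observation that the paper itself uses implicitly in the proof of Theorem~\ref{theorem2}: since $(\alpha_k)_k$ and $(\beta_k)_k$ are independent and the SINR is decreasing in $\beta_k$, conditioning on index $k$ being scheduled can only shift the law of $\beta_k$ downward, so $\beta^*_n$ is stochastically dominated by a single unscheduled $\beta_1$; as $\beta_1$ has finite second moment under the bounded-distance Rayleigh model, the family $(\beta^*_n)_n$ has uniformly bounded second moments and is therefore uniformly integrable. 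With that substitution your argument is complete.
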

\begin{proof}
Using that $ \alpha^*_n<\alpha^{\Up*}_n$, we lower bound $\Delta_R(n)$ as
\begin{equation}
\begin{aligned}
\Delta_R(n)&=\E\left[\log_2\left(\alpha^{\Up*}_n\frac{1+\beta_n^*}{\alpha^{*}_n}\right)\right]+o(1)\\
\Delta_R(n)&\geq\E\left[\log_2\left(1+\beta_n^*\right)\right]+o(1).
\end{aligned}
\label{eq:proof_sym_12} 
\end{equation} 
The last term is the expectation of a non negative function such that it is equal to zero only if the function is equal to zero almost everywhere, which contradicts Theorem~\ref{theorem2}.
\end{proof}
\section{Simulations}
We simulate a multicell networks made of three cells with their first ring of interferers with full frequency reuse. We consider the propagation parameters of the LTE cellular network for the Hata urban scenario path loss model. Our parameters give an attenuation between a BS and a UE located at a distance of $d$ equal to $-114.5-37.19\log_{10}(d)$ dB with $d$ in km and the antennas gains taken into account. The transmit power is $P_{\text{dBm}}=40$ dBm per BS, the noise power $P_{\text{noise,dBm}}=-101$ dBm, and the radius of the cell $R=2$ kms. For the symmetric setting, the radius is $R_{\text{sym}}=1$ km. 

For comparison, a Joint Processing (JP) multicell transmission scheme with three cooperating BSs has been also used. It consists in applying a precoder removing the intercell interferences for the UEs selected distributively at the BSs. Waterfilling is then applied to the obtained diagonal multiuser channel, and we finally normalize the precoder to fulfill the per BS power constraint. For the sake of comparison with the JP multicell scheme, only the intracell interference have been removed in the "no-interference upper bound". We also compare the max-SINR scheduler with a scheduler maximizing the direct link gain, called the \emph{max-Gain} scheduler.

\begin{figure}[t] 
\centering
\includegraphics[width=1\columnwidth]{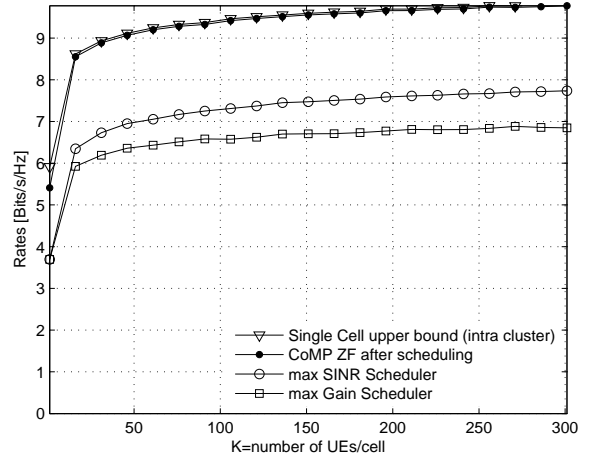}
\caption{Average rates in the symmetric model with $R_{\text{sym}}=1000$m as a function of the number of UEs per cell.}
\label{figure1}
\end{figure}

\begin{figure}[t] 
\centering
\includegraphics[width=1\columnwidth]{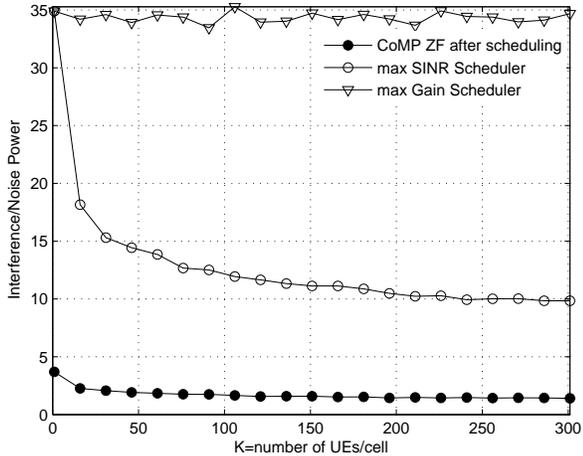}
\caption{Average interference power normalized by the noise variance in the symmetric case with $R_{\text{sym}}=1000$m in terms of the number of UEs per cell.}
\label{figure2}
\end{figure}

In Fig.~\ref{figure1}, the difference between the no-interference upper bound and the max-SINR scheduler is significant and decreases very slowly, while the rate of the max-Gain scheduler does not seem to converge to the single cell upper bound. Actually, the rate difference was not plotted due to space constraint, but the convergence to a positive constant is then very obvious. Note that the JP multicell scheme performs as the "no-interference upper bound". Moreover, in Fig.~\ref{figure2}, the remaining interference after scheduling decrease monotonically. Yet, the convergence to zero is very slow, and for realistic number of UEs per cell, the interference power remains much larger than the noise power.

\begin{figure}[t] 
\centering
\includegraphics[width=1\columnwidth]{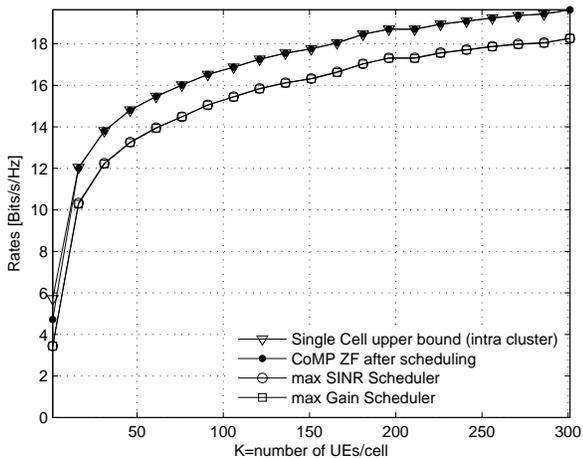}
\caption{Average rates in the asymmetric model as a function of the number of UEs per cell.}
\label{figure3}
\end{figure}

\begin{figure}[t] 
\centering
\includegraphics[width=1\columnwidth]{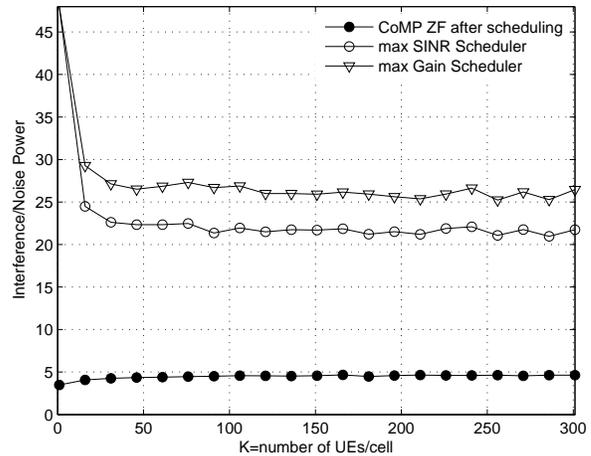}
\caption{Average interference power normalized by the noise variance in the asymmetric case in terms of the number of UEs per cell.}
\label{figure4}
\end{figure}

In Fig.~\ref{figure3}, the rate of the JP multicell transmission is overlapped with the "no-interference upper bound", while the rate of the max-Gain scheduler is overlapped with the rate of the max-SINR scheduler. Moreover, the rate difference does not decrease as the number of UEs increases and converges to a constant. This can be put in relation with Fig.~\ref{figure4}, where the interference after max-SINR scheduler decrease only slightly compared to the interference from the max-Gain scheduler, and then stop decreasing.
\section{Conclusion}
We have analyzed the asymptotic sum rate in terms of the number of UEs per cell for two channel models: The symmetric model when the path loss is the same for all the UEs and the asymetric model when the UEs are uniformly distributed in the cell. We have shown that the asymptotic properties of the interference depend strongly on the channel model. Indeed, the average interference converge to zero in the symmetric case and to a positive constant in the asymetric case. By quantifying the rate of convergence for the symmetric model, we make the observation that the interference power will remain significant at practical number of UEs. This underlines the need for other methods to manage the interference. Finally, introducing and studying intermediate less extreme models in terms of fairness is an interesting direction of research. Furthermore, the extension to multiple-antennas systems is a challenging problem to be considered in future works.

\bibliographystyle{IEEEtran}
\bibliography{Literatur_scheduling}
\end{document}